\documentclass[conference,10pt]{IEEEtran}
\usepackage[utf8]{inputenc}
\usepackage{graphicx}
\usepackage{amsmath}
\usepackage{amssymb}
\usepackage{listings}
\usepackage{xcolor}
\usepackage{booktabs}
\usepackage{multirow}
\usepackage{tikz}
\usepackage{balance}
\usepackage{seqsplit}
\usepackage{caption}
\usepackage{xspace}
\usepackage{hyperref}
\usetikzlibrary{shapes.geometric, arrows, positioning, fit, backgrounds}

\hypersetup{
    colorlinks=true,
    linkcolor=black,
    citecolor=black,
    urlcolor=black
}

\lstdefinelanguage{json}{
    basicstyle=\ttfamily\footnotesize,
    morestring=[b]",
    stringstyle=\color{red!60!black},
    literate=
     *{:}{{{\color{blue}:}}}{1}
      {,}{{{\color{blue},}}}{1}
      {\{}{{{\color{blue}\{}}}{1}
      {\}}{{{\color{blue}\}}}}{1}
      {[}{{{\color{blue}[}}}{1}
      {]}{{{\color{blue}]}}}{1}
}

\lstdefinelanguage{Solidity}{
    keywords={function, external, internal, public, private, view, pure, payable, returns, return, if, else, for, while, memory, calldata, storage, string, uint256, address, bytes, bytes32, bool, mapping, struct, contract, import, pragma, require, emit, event, modifier},
    morecomment=[l]{//},
    morecomment=[s]{/*}{*/},
    morestring=[b]",
    morestring=[b]',
    sensitive=true
}

\lstdefinelanguage{Rust}{
    keywords={fn, let, mut, if, else, for, while, loop, match, return, pub, struct, enum, impl, use, mod, self, Self, Ok, Err, Some, None, Result, Option, String, Vec, u128, u64, u32, u8, bool},
    morecomment=[l]{//},
    morecomment=[s]{/*}{*/},
    morestring=[b]",
    sensitive=true
}

\newtheorem{theorem}{Theorem}

\newcommand{\eg}{\textit{e.g.}\xspace}
\newcommand{\etal}{\textit{et al.}}

\begin{document}

\title{An EVM--Cosmos Bridge for Atomic, Semantic-Preserving Transfer of Environmental Tokens}

\author{
    \IEEEauthorblockN{
        Sebastian Pandolfi\IEEEauthorrefmark{1},
        Guzm\'an Llamb\'ias\IEEEauthorrefmark{1}\IEEEauthorrefmark{2},
        Laura Gonz\'alez\IEEEauthorrefmark{1}
    }
    \\
    \IEEEauthorblockA{
        \IEEEauthorrefmark{1}Facultad de Ingenier\'ia, Universidad de la Rep\'ublica, Montevideo, Uruguay\\
        \{sebastian.pandolfi, gllambi, lauragon\}@fing.edu.uy
    }
    \IEEEauthorblockA{
        \IEEEauthorrefmark{2}Pyxis Research, Pyxis, Montevideo, Uruguay\\
        guzman.llambias@pyxis.tech
    }
}

\maketitle

\begin{abstract}
Transferring semantically typed tokens between heterogeneous blockchains---EVM and Cosmos ecosystems---raises two problems: \textit{semantic mixing}, the loss of a token's domain-specific meaning in transit (\eg, CO$_2$ removals issued under different methodologies), and \textit{atomicity} despite divergent consensus and timing assumptions. We propose a bridge that addresses both. A \textit{deterministic indicator identity} derived from cryptographic commitments, mirrored in class registries on both chains and checked at mint time, preventing semantic mixing by construction. For atomicity, we extend hash time-locked contracts (HTLCs) from two-party exchange to single-owner transfer with an \textit{automatic callback} that burns the source escrow after a destination claim, and a protocol-minimum \textit{bounty} that makes the fallback burn incentive-compatible even against adversarial senders. We machine-check the protocol's invariants with TLC---supply conservation fails without the rational-monitor assumption and holds with it---and validate a bidirectional prototype on Polygon and Neutron mainnets over Axelar GMP: fully automatic transfers with a median end-to-end latency of 97~s over 50 Polygon$\rightarrow$Neutron transfers across 7 days (all completed by the on-chain automatic callback) and 93~s in the reverse direction, and semantic-validation overhead below 0.4\% of messaging cost. We also document the integration failure modes of Cosmos-originated messaging, whose dominant operational risk proves to be relay \emph{eligibility} rather than latency.
\end{abstract}

\begin{IEEEkeywords}
Blockchain interoperability, cross-chain bridge, semantic mixing, environmental indicators, HTLC, ERC-1155, Axelar GMP, EVM, Cosmos, CosmWasm
\end{IEEEkeywords}

\section{Introduction}
\label{sec:introduction}

The proliferation of blockchain networks has created a fragmented ecosystem in which assets and data remain siloed across distinct consensus mechanisms, execution environments, and communication protocols~\cite{belchior2021,wang2021}. Cross-chain interoperability has therefore emerged as critical infrastructure, while bridges have become prime attack targets: Zhang~\etal\ document 35 attacks and ten vulnerability classes~\cite{zhang2024bridges}, and Chainalysis estimates \$2B stolen across 13 bridge hacks in 2022 alone~\cite{chainalysisBridges2022}.

A particularly challenging case arises with \textit{domain-specific tokens}---tokens whose units carry application-level meaning rather than being fully fungible. Pandolfi~\etal~\cite{pandolfi2025rapid} survey this challenge for \textit{environmental indicator tokens}, the focus of this paper. These tokens represent quantitative measurements of environmental impact such as CO$_2$ removals, renewable energy generation, or treated-water volumes~\cite{frontiers2024klimadao}. Unlike fully fungible assets, they have \textit{meaning}: 1~kgCO$_2$e of carbon removal issued under Verra~VM0007 is not interchangeable with one issued under Gold~Standard~GS-AFOLU, let alone with 1~kWh of energy. We call such tokens \textit{multi-class tokens}: grouped into discrete classes that are fungible within a class and non-interchangeable across classes. Transferring them across architecturally dissimilar chains poses two problems.

\textit{Problem~1 (Atomicity across heterogeneous chains).} Domain-independent: any cross-chain transfer must avoid leaving tokens on both chains or on neither. Classical two-phase commit~\cite{gray1978} requires a trusted coordinator; HTLCs~\cite{herlihy2018} provide a trustless alternative but suffer from a \textit{timeout race condition} when relay latency stretches between heterogeneous ecosystems~\cite{thyagarajan2022}.

\textit{Problem~2 (Semantic mixing).} A bridge treating all tokens as fungible merges distinct classes into a single balance: Verra- and Gold-Standard-issued CO$_2$ bridged through a standard ERC-20 bridge arrive indistinguishable, with no way to recover methodology, registry, or vintage. The mechanism is documented at the application layer---Toucan's Base~Carbon~Tonne (BCT) pool aggregated heterogeneous credits into one fungible ERC-20, creating a ``lemons'' market in which low-quality ``zombie'' credits displaced higher-integrity ones~\cite{carbonplan2022}. The cross-chain instance has not yet materialised at scale, but the BCT episode confirms the underlying dynamics; the bridge is the natural place to forestall it by design.

To our knowledge, no existing cross-chain approach simultaneously addresses both problems for multi-class tokens across heterogeneous EVM--Cosmos environments~\cite{zamyatin2021,belchior2021,satp2023,wormhole2024,layerzero2021,ccip2023,axelar2024its}.

 We propose a bridge that combines relay-based messaging with an HTLC atomic-transfer protocol augmented by an \textit{automatic callback mechanism}. After the destination claim reveals the secret, the destination contract emits a relayed message that burns the escrowed source tokens, closing the timeout-race window in a bounded delay rather than at $T_e$; a bounty escrowed at lock time and paid to whoever finalises the source-side burn makes the fallback path incentive-compatible. While HTLCs have traditionally been used for atomic \textit{exchange} (two-party swap), we adapt them to atomic \textit{transfer} (single-owner burn-and-mint) and explicitly handle heterogeneous-chain timing. Semantic mixing is prevented by deriving a \textit{deterministic indicator identity} from cryptographic commitments and mirroring a class registry on both chains, restricting destination mints to pre-registered classes. The five contributions, used as a comparison framework with related work (Table~\ref{tab:bridges}), are: \textit{HTLC-based atomicity for cross-chain transfer~(F1)}, \textit{automatic callback~(F2)}, \textit{deterministic indicator identity~(F3)}, \textit{cross-chain semantic-consistency enforcement~(F4)}, and \textit{multi-class transfer with mirrored registry~(F5)}.

We instantiate the design as a bidirectional prototype using ERC-1155 tokens on Polygon and CosmWasm contracts on Neutron, with Axelar GMP as the relay. Mainnet evaluation demonstrates fully automatic transfers in both directions (108~s and 93~s in representative runs), characterizes latency over 50 transfers across 7 days, exercises the refund and monitor-recovery paths, and reports the integration failure modes encountered during deployment; semantic-validation overhead stays below 0.4\% of cross-chain messaging cost. Sections~\ref{sec:background}--\ref{sec:conclusion} present background, related work, design, prototype, evaluation, discussion, and conclusions.

\section{Background}
\label{sec:background}

\subsection{Cross-Chain Transactions, Relayers, and Bridges}
\label{sec:bg-taxonomy}

A \textit{cross-chain transaction} spans independent ledgers; because no ledger natively observes another, an external \textit{relayer} forwards messages, and the surrounding protocol is a \textit{bridge}. Zamyatin~\etal~\cite{zamyatin2021} systematise bridges as \textit{notary}, \textit{relay/sidechain}, and \textit{HTLC} categories---not mutually exclusive; ours combines a relay scheme with HTLCs---and define the \textit{atomicity}, \textit{consistency}, and \textit{liveness} properties we use throughout. Transfers reconcile supply via \textit{burn-and-mint} (irreversible, supply-conserving; our choice) or custodial \textit{lock-and-mint}~\cite{zamyatin2021,robinson2022}.

Axelar~\cite{axelar} exemplifies a relay-scheme bridge: a proof-of-stake validator network verifies and forwards messages, exposing General Message Passing (GMP) on EVM and delivering to Cosmos destinations over IBC~\cite{ibc}. The EVM--Cosmos pair is heterogeneous in three ways that shape the protocol: (i)~\emph{serialisation}---EVM consumes ABI-encoded bytes, CosmWasm JSON, forcing an explicit encoding boundary (Section~\ref{sec:implementation}); (ii)~\emph{finality}---probabilistic on the EVM side vs.\ instant BFT, forcing an asymmetric confirmation policy visible in the leg latencies (Section~\ref{sec:evaluation}); and (iii)~\emph{relayer infrastructure}---EVM$\rightarrow$Cosmos legs are delivered end-to-end by Axelar's own relayer, whereas Cosmos-originated messages depend on public IBC relayers plus fee-incentivised executors, and must carry an explicit relayer fee in an Axelar-registered asset or they are routed but never executed (Section~\ref{sec:eval-failures}).

\subsection{Hash Time-Locked Contracts and the Timeout Race Condition}
\label{sec:bg-htlc}

A hash time-locked contract (HTLC)~\cite{herlihy2018} releases escrowed assets either on timely revelation of a preimage of a cryptographic hash or on expiration of a time bound. A sender escrows under hashlock $H = \text{hash}(S)$ with deadline $T$; the recipient claims by revealing $S$ before $T$, else the sender refunds after $T$. HTLCs underpin the Lightning Network and were formalised by Herlihy~\cite{herlihy2018} for atomic cross-chain swaps. Across two chains, an HTLC uses paired deadlines $T_e$ (source refund) and $T_c$ (destination claim) with $T_c < T_e$; safety relies on the source chain learning of a destination claim before $T_e$.

Across heterogeneous chains this assumption is fragile: if the destination claim lands near $T_c$ and secret propagation exceeds $(T_e - T_c)$, a malicious user can claim on the destination and still refund on the source---the \textit{timeout race condition}, documented in security surveys~\cite{zhang2024bridges,li2025bridges} and addressed cryptographically only for swap protocols~\cite{thyagarajan2022}. Two-phase commit~\cite{gray1978} removes the race via a trusted coordinator, the single point of trust bridges seek to avoid.

\subsection{Environmental Indicator Tokens}
\label{sec:indicators}

Environmental indicator tokens represent quantitative environmental measurements---tonnes of CO$_2$ removed, kilowatt-hours generated, litres of water treated~\cite{pandolfi2025rapid,frontiers2024klimadao}. Each class carries metadata (indicator type, canonical unit, issuance methodology, and cryptographic commitments to a profile document and evidence bundle), and fungibility holds only within a class: methodologies differ in additionality, monitoring, and verification stringency~\cite{frontiers2024klimadao,pandolfi2025rapid}, and differing units are non-interchangeable a fortiori. The BCT episode of Section~\ref{sec:introduction} shows the consequence of collapsing this structure---and Toucan's reactive countermeasures (gating, redemption fees) show why prevention-by-design beats ex-post remediation~\cite{carbonplan2022}. The ERC-1155 multi-token standard~\cite{erc1155} represents non-interchangeable classes within one contract via distinct \texttt{tokenId} values; our design uses ERC-1155 on the source and a mirrored registry on the destination (Section~\ref{sec:design}).

\subsection{System Model and Assumptions}
\label{sec:threat-model}

The protocol's safety claims rest on six assumptions. \textbf{A1 (cryptographic strength):} keccak256 is collision- and second-preimage-resistant on 32-byte inputs. \textbf{A2 (consensus safety):} each chain produces a single canonical history once finality is reached; the source-chain finality is enforced by the relay's confirmation policy (Section~\ref{sec:evaluation}). \textbf{A3 (forward-relay liveness):} the relay delivers the prepare message to the destination at least once within $T_e$; A2+A3 ensure that an initiated transfer can always be either claimed or refunded. \textbf{A4 (rational-monitor callback safety):} if the callback fails, safety requires at least one party with knowledge of $S$ to invoke \texttt{claimBurn} on the source chain before $T_e$ (Section~\ref{sec:disc-callback-gap}); the capable set is unbounded in principle, and the protocol escrows a bounty of at least a protocol-enforced minimum $b_{\min}$---sized above the monitor's execution cost---paid to whichever party does so, reducing the assumption to the existence of one rational, live observer of the destination chain. \textbf{A5 (honest profile distributor at registration):} the issuer publishes a profile matching the on-chain \texttt{profileHash} at registration time. \textbf{A6 (relay message authenticity):} an inbound message accepted by either chain's gateway was emitted by the counterpart bridge contract with the stated payload; in the prototype this is discharged by Axelar's proof-of-stake validator quorum. A relay violating A6 can mint unbacked tokens into \emph{registered} classes but cannot mix classes and cannot burn escrow before a genuine claim publishes $S$ (the callback handler verifies the preimage). The adversary may control users, message delivery (subject to A3), and observers; it may not break A1, A2, A5, or A6. We do not assume an honest user, an honest issuer past registration, or a live relay; message authenticity is the only property for which the relay is trusted. Table~\ref{tab:assumptions} summarises which assumptions each safety property depends on; all property statements in Sections~\ref{sec:state-machine} and~\ref{sec:eval-security} reference it.

\begin{table}[t]
\centering
\caption{Assumption dependencies of the safety properties. ``cond.'' marks properties that depend on A4 only in the callback-failure regime (Section~\ref{sec:disc-callback-gap}).}
\label{tab:assumptions}
\begin{tabular}{@{}lcccccc@{}}
\toprule
\textbf{Property} & A1 & A2 & A3 & A4 & A5 & A6 \\
\midrule
P1 atomicity (per transfer) & \checkmark & \checkmark & -- & cond. & -- & -- \\
P2 supply conservation & \checkmark & \checkmark & \checkmark & cond. & -- & \checkmark \\
P3 semantic pres.\ (class identity) & \checkmark & -- & -- & -- & \checkmark & -- \\
Thm.~\ref{thm:semantic} (identity $+$ amount backing) & \checkmark & -- & -- & -- & \checkmark & \checkmark \\
\bottomrule
\end{tabular}
\end{table}

\section{Related Work}
\label{sec:related}

We compare related work against the five features F1--F5 of Section~\ref{sec:introduction}, organised into four groups.

\subsection{Cross-Chain Protocol Formalisations}

Herlihy~\cite{herlihy2018} formalises atomic cross-chain swaps for fungible exchanges; we adopt his HTLC foundation~(F1) but address its two main limitations under heterogeneous chains: the timeout race (via~F2) and the implicit assumption of semantic equivalence (via~F3, F4). Thyagarajan~\etal~\cite{thyagarajan2022} harden swaps with adaptor signatures but still assume fungible assets. Our protocol is a relay+HTLC \textit{hybrid} in Zamyatin~\etal's taxonomy~\cite{zamyatin2021}. The major surveys~\cite{belchior2021,robinson2022,zhang2024bridges,li2025bridges} identify semantic preservation in no surveyed system; Guo~\etal~\cite{guo2024} combine relaying with hash-locking for fungible tokens only~(F3, F4 absent). Xie~\etal~\cite{zkbridge2022} build a trust-minimised zero-knowledge Cosmos--Ethereum bridge with generic, semantics-agnostic messaging. Mari\'{c}~\etal~\cite{maric2025} formally verify a bridge implementation in Isabelle/HOL; we machine-check our protocol's state machine with TLC (Section~\ref{sec:eval-security}) but leave implementation-level verification open (Section~\ref{sec:limitations}).

\subsection{Protocol Specifications and Standards}

The IETF SATP~\cite{satp2023} targets atomic burn-and-mint transfers via trusted gateways executing 2PC---conceptually adjacent, but gateway-trusted rather than cryptographically locked, specification-stage rather than deployed, and without semantic preservation~(F3, F4). Cosmos IBC~\cite{ibc} natively connects Cosmos~SDK chains: ICS-20~\cite{ics20} moves single-denomination fungible tokens and ICS-721~\cite{ics721} unique NFTs---neither the semi-fungible multi-class semantics indicators require, and neither with HTLC atomicity~(F1) or cryptographic class-identity binding~(F3). Our protocol uses IBC indirectly through Axelar and adds both above it.

\subsection{Production Bridge Systems}

Production bridges trade cryptographic mechanisms for chain coverage under validator-quorum trust: \textit{Gravity Bridge}~\cite{gravity2022}, \textit{Wormhole}~\cite{wormhole2024}, \textit{LayerZero}~\cite{layerzero2021}, \textit{Chainlink CCIP}~\cite{ccip2023}, and \textit{Axelar ITS}~\cite{axelar2024its} support fungible (typically single-class) tokens and provide none of F1--F5 as protocol primitives; optimistic designs such as Hop~\cite{hop2021} substitute fraud proofs, again without F1 or F3--F4. The exploit record---most prominently the \$190M Nomad replay hack~\cite{nomad2022}---and the broader threat landscape~\cite{zhang2024bridges} motivate cryptographic locks over pure honest-majority designs.

Table~\ref{tab:bridges} compares these approaches; each ``No'' denotes absence as a \textit{built-in protocol-layer primitive} (a generic messaging bridge can compose F1--F4 at the application layer, but none surveyed offers them natively). Two contributions follow: (i)~no surveyed bridge integrates HTLC atomicity with an automatic callback closing the heterogeneous-chain timeout race; (ii)~no surveyed approach, at any layer, preserves domain-specific token semantics across transfer~\cite{zhang2024bridges,li2025bridges,belchior2021}.

\subsection{Tokenised Environmental Assets}

On the domain side, Toucan~\cite{toucan2022} tokenises registry-verified carbon credits onto Polygon as pooled assets (BCT, NCT), and KlimaDAO~\cite{klimadao2024,frontiers2024klimadao} builds carbon DeFi on top. Both distinguish broad credit categories but neither cryptographically binds classes to a methodology and evidence bundle~(F3) nor addresses cross-chain transfer with semantic preservation~(F4)---the problem our protocol solves.

\begin{table}[t]
\centering
\caption{Comparison of cross-chain approaches against the proposed features. ``No'' denotes \textit{absence as a protocol-layer primitive}; features F1--F2 can in principle be obtained by composing application-layer contracts on top of any general-purpose messaging bridge.}
\label{tab:bridges}
\resizebox{\columnwidth}{!}{%
\begin{tabular}{@{}lccccccc@{}}
\toprule
\textbf{Feature (protocol layer)} & \textbf{SATP} & \textbf{Gravity} & \textbf{Worm.} & \textbf{LZ} & \textbf{CCIP} & \textbf{ITS} & \textbf{Ours} \\
\midrule
F1: HTLC atomicity & No\textsuperscript{a} & No & No & No & No & No & Yes \\
F2: Auto. callback & No & No & No & No & No & No & Yes \\
F3: Determ. class identity & No & No & No & No & No & No & Yes \\
F4: Cross-chain class check & No & No & No & No & No & No & Yes \\
F5: Multi-class transfer & No & No & No & No & No & No & Yes \\
\midrule
Chain coverage & N/A\textsuperscript{b} & 2 & 30+ & 100+ & 60+ & 20+ & 2 \\
Bidirectional & Yes & Yes & Yes & Yes & Yes & Yes & Yes \\
\bottomrule
\multicolumn{8}{@{}l}{\scriptsize \textsuperscript{a}SATP uses 2PC via trusted gateways. \textsuperscript{b}SATP is specification-stage.} \\
\multicolumn{8}{@{}l}{\scriptsize Worm.=Wormhole; LZ=LayerZero; CCIP=Chainlink CCIP; ITS=Axelar ITS.}
\end{tabular}%
}
\end{table}

\section{System Design}
\label{sec:design}

This section describes the architecture (Section~\ref{sec:design-arch}), the indicator-token model that prevents semantic mixing (Section~\ref{sec:indicator-model}), the end-to-end protocol (Section~\ref{sec:flow}), and its state machine and invariants (Section~\ref{sec:state-machine}). The design is relay- and contract-environment-agnostic, requiring only that the destination support state, contract calls, and inbound/outbound relayed messages; Section~\ref{sec:implementation} describes one instantiation.

\subsection{Architecture Overview}
\label{sec:design-arch}

Figure~\ref{fig:architecture} shows the components and their interactions.

\begin{figure}[t]
\centering
\resizebox{\columnwidth}{!}{%
\begin{tikzpicture}[
    box/.style={rectangle, draw, rounded corners=3pt, minimum width=2.6cm, minimum height=0.9cm, align=center, font=\footnotesize},
    evmbox/.style={box, fill=blue!8, draw=blue!40},
    cosmosbox/.style={box, fill=green!8, draw=green!40},
    relaybox/.style={box, fill=orange!8, draw=orange!40},
    userbox/.style={box, fill=gray!10, draw=gray!50},
    arrow/.style={->, >=stealth, thick},
    dashedarrow/.style={->, >=stealth, thick, dashed},
    lbl/.style={font=\scriptsize, fill=white, inner sep=1.5pt},
]
\node[userbox] (user) at (0,0) {User};
\node[evmbox] (token) at (5,0) {IndicatorToken1155\\(ERC-1155)};
\node[evmbox] (bridge) at (10,0) {BridgeHTLC};
\node[relaybox] (axelar) at (10,-2.8) {Axelar GMP};
\node[cosmosbox] (cosmos) at (10,-5.6) {token\_bridge\_receiver\\(CosmWasm)};
\begin{scope}[on background layer]
    \node[draw=blue!30, fill=blue!3, rounded corners=5pt, fit=(token)(bridge), inner sep=10pt, label={[font=\footnotesize\bfseries]above:Polygon (EVM)}] {};
    \node[draw=green!30, fill=green!3, rounded corners=5pt, fit=(cosmos), inner sep=10pt, label={[font=\footnotesize\bfseries]below:Neutron (Cosmos)}] {};
\end{scope}
\draw[arrow, <->] (token) -- node[lbl, above] {escrow/burn} (bridge);
\draw[arrow] (user) to[bend left=15] node[lbl, above, pos=0.55] {\textbf{1.} lockForBurn} (bridge);
\draw[arrow] (bridge.south) -- node[lbl, right, xshift=2pt] {\textbf{2.} GMP$_1$} (axelar.north);
\draw[arrow] (axelar.south) -- node[lbl, right, xshift=2pt] {\textbf{3.} prepare\_mint} (cosmos.north);
\draw[arrow] (user.south) -- (0,-5.6) -- node[lbl, below] {\textbf{4.} claim\_mint($S$)} (cosmos.west);
\draw[dashedarrow] (cosmos.east) -- (12.5,-5.6) -- node[lbl, right, xshift=2pt] {\textbf{5.} GMP$_2$} (12.5, 0) -- node[lbl, above] {burn} (bridge.east);
\end{tikzpicture}}
\caption{Architecture and happy-path flow: (1)~\texttt{lockForBurn} escrows tokens; (2--3)~the relay delivers the prepare message; (4)~\texttt{claim\_mint} reveals the secret and mints; (5)~the automatic callback (dashed only to distinguish the return leg) burns the source escrow.}
\label{fig:architecture}
\end{figure}

Four components compose the system: the \textit{source multi-token contract} (class definitions and balances); the \textit{source bridge contract} (escrow, outbound prepare messages, inbound callbacks); the \textit{relay}, trusted for message \emph{authenticity} (A6) but not protocol \emph{correctness}---it cannot claim without $S$ or mint into unregistered classes, and its unavailability affects liveness only; and the \textit{destination bridge-receiver contract} (mirrored registry, sender and class validation, mint, callback emission). A per-transfer hash $H$ ties three phases together: class registration (once per class, each chain), transfer (source escrow + destination pending-mint), and claim-and-callback.

\subsection{Indicator Token Model: Preventing Semantic Mixing (F3, F5)}
\label{sec:indicator-model}

The model makes class identity content-addressed, immutable, and reproducible without coordination. Each indicator is described by an off-chain profile of its semantic attributes---indicator type (\eg, \texttt{CO2\_REMOVAL}), canonical unit (\eg, \texttt{kgCO2e}), methodology identifier (\eg, \texttt{Verra-VM0007}), and an evidence bundle---anchored on-chain by two digests, \texttt{profileHash} (canonicalised profile) and \texttt{dataHash} (evidence), which bound storage while binding the class to a verifiable artefact. The class identifier is $\texttt{indicatorId} = \text{keccak256}(\texttt{profileHash})$: under~A1, parties on different chains holding the same profile derive the same identifier with no coordination protocol.
On the source chain, each \texttt{tokenId} is immutably bound at creation to one \texttt{indicatorId} through a metadata record carrying exactly these fields; on the destination, a mirrored \texttt{TokenClass} record stores the same fields under the same \texttt{tokenId}. Fungibility holds within a class, never across classes.

\textit{Issuance and amount semantics.} Class creation is owner-only on both chains (destination registration additionally whitelist-gated); minting is restricted to the bridge contracts. Amounts are integers in the class's canonical unit, identical on both chains; cross-class unit conversion is deliberately \emph{not} performed---it would itself be semantic mixing---and unit harmonisation is out of scope (Section~\ref{sec:limitations}).

\subsection{End-to-End Flow}
\label{sec:flow}

\textit{Registration} (once per class): the owner builds the profile (Section~\ref{sec:indicator-model}), computes \texttt{profileHash}, \texttt{dataHash}, and \texttt{indicatorId}, and registers the class on both chains with the same metadata (owner-only on both; destination registration additionally whitelist-gated). Content-addressing guarantees both chains derive identical \texttt{indicatorId} values without coordination.

\textit{Transfer}: a user $U$ holding class $T$ generates a 32-byte secret $S$, computes $H = \text{keccak256}(S)$, and calls \texttt{lockForBurn}$(T, A, H, T_e, R, b)$ with amount $A$, source timeout $T_e$, destination recipient $R$, and bounty $b \geq b_{\min}$---escrowed alongside the tokens, paid to whoever finalises the source-side burn via the permissionless \texttt{claimBurn}, and returned to $U$ when the automatic callback performs the burn or on refund; the protocol-enforced minimum $b_{\min}$ makes the fallback incentive-compatible against an adversarial sender (Section~\ref{sec:disc-callback-gap}). The bridge escrows the tokens, reads \texttt{indicatorId} from the token contract (never from user input), and emits a prepare message embedding $H$, \texttt{indicatorId}, $T$, $A$, $R$, and the destination claim deadline $T_c$; only $T_c$ (not $T_e$) crosses the wire, so $\Delta T = T_e - T_c$ remains source-side. The destination validates the inbound sender and the class identity (Section~\ref{sec:state-machine}) and records a pending mint indexed by $H$.

\textit{Claim and callback}: $U$ calls \texttt{claim\_mint}$(H, S)$ before $T_c$; the contract verifies the preimage, re-validates the class, mints $A$ to $R$, and emits the $(H, S)$ callback, which the relay delivers to the source \texttt{\_execute} to burn the escrow. Since $S$ is public after the claim, \texttt{claimBurn}$(H,S)$ remains available to any party---the property the failure analysis builds on. \textit{Refund}: an unclaimed pending mint is cancelled by anyone after $T_c$; $U$ recovers escrow and bounty via \texttt{refundBurn} after $T_e$.

The callback shortens the source chain's ignorance of a destination claim from ``until $T_e$'' to one relay hop; supply conservation holds under adversarial destination timing as long as callback delay stays materially below $\Delta T$ (quantified in Section~\ref{sec:evaluation}; the prototype uses $T_e = 2$\,h, $\Delta T = 30$\,min, more than an order of magnitude above observed callback latencies).

\subsection{Protocol State Machine, Invariants, and Semantic Preservation}
\label{sec:state-machine}

Per transfer $H$, the source lock takes states $\Sigma_s = \{\bot, \textsc{Locked}, \textsc{Claimed}, \textsc{Refunded}\}$ and the destination $\Sigma_d = \{\bot, \textsc{Pending}, \textsc{Minted}, \textsc{Refunded}_d\}$, with guarded transitions: \texttt{lockForBurn} ($\bot \to \textsc{Locked}$); \texttt{prepare\_mint} ($\bot \to \textsc{Pending}$; guard: A6-authentic message referencing a registered class with matching \texttt{indicator\_id}); \texttt{claim\_mint} ($\textsc{Pending} \to \textsc{Minted}$; guard: $\text{keccak256}(S)=H \wedge t < T_c$); \texttt{\_execute}/\texttt{claimBurn} ($\textsc{Locked} \to \textsc{Claimed}$; guard: $\text{keccak256}(S)=H$); \texttt{refund\_mint} ($\textsc{Pending} \to \textsc{Refunded}_d$; guard: $t \geq T_c$); and \texttt{refundBurn} ($\textsc{Locked} \to \textsc{Refunded}$; guard: $t \geq T_e$, sender-only). \textsc{Claimed}, \textsc{Refunded}, \textsc{Minted}, and $\textsc{Refunded}_d$ are terminal; duplicate or replayed messages target non-$\bot$ states and are rejected without state change. Three invariants over $\Sigma_s \times \Sigma_d$ capture the paper's safety claims: \textbf{I1} (single settlement): each side reaches at most one terminal state; \textbf{I2} (supply conservation): $\neg(\textsc{Refunded} \wedge \textsc{Minted})$; \textbf{I3} (semantic preservation): $\textsc{Minted}$ into class $c$ implies a prior $\textsc{Locked}$ of equal amount whose source class carries the same \texttt{indicatorId} as $c$. Assumption dependencies follow Table~\ref{tab:assumptions}: I1 holds under A1--A2; I3 is the model-level projection of Theorem~\ref{thm:semantic} (within the model, whose relay may delay, drop, or duplicate but not forge messages, it is structural; the on-chain guarantee requires A1, A5, and---for amount backing---A6); I2, and therefore per-transfer atomicity as the conjunction I1$\,\wedge\,$I2, requires A4 in the callback-failure regime (Section~\ref{sec:disc-callback-gap}). Section~\ref{sec:eval-security} reports exhaustive model checking of I1--I3.

\begin{theorem}[Semantic preservation]
\label{thm:semantic}
Under A1, A5, and A6, every destination mint of amount $A$ into class $c$ is preceded by a source escrow of amount $A$ of a class $c'$ with $\texttt{indicatorId}(c') = \texttt{indicatorId}(c)$, and $c$ and $c'$ commit to the same profile.
\end{theorem}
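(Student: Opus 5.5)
The proof traces a single destination mint backward through the guarded transitions of the state machine in Section~\ref{sec:state-machine}. It then discharges each backward step with exactly one of A1, A5, or A6. Fix a mint of amount $A$ into class $c$ under transfer hash $H$. The only transition reaching \textsc{Minted} is \texttt{claim\_mint}, whose source state is \textsc{Pending}. So there is an earlier \texttt{prepare\_mint} for $H$ that created the pending record. That record fixed the amount $A$, the class $c$ (by \texttt{tokenId}), and the carried \texttt{indicatorId} $\iota$. The guard of \texttt{prepare\_mint} requires $c$ to be registered on the destination with stored \texttt{indicator\_id} equal to $\iota$. I would also note that \texttt{claim\_mint} re-validates the class and mints the stored $A$ rather than a user-supplied amount, so the mint matches the pending record exactly. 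The immutability of \texttt{TokenClass} records ensures the registry entry cannot change between prepare and claim.

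Second, A6 moves the argument to the source chain. The prepare message was accepted by the destination gateway, so it was emitted by the source bridge contract with the stated payload $(H,\iota,c',A,R,T_c)$. The source bridge emits prepare messages only inside \texttt{lockForBurn}, which escrows exactly $A$ units of \texttt{tokenId} $c'$ before emitting. That call also reads $\iota$ from the token contract's metadata record for $c'$, never from user input. Because the metadata binding is immutable at creation, $\iota = \texttt{indicatorId}(c')$. Combining this with the first step gives $\texttt{indicatorId}(c') = \iota = \texttt{indicatorId}(c)$, with the escrow preceding the mint. This establishes the amount-backing and identity clauses.

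Third, I would upgrade equality of identifiers to equality of committed profiles, using A1 and A5. Both records store \texttt{profileHash}, and \texttt{indicatorId} $=\text{keccak256}(\texttt{profileHash})$. Collision resistance (A1) gives $\texttt{profileHash}(c) = \texttt{profileHash}(c')$ except with negligible probability: a mismatch would be an explicit keccak256 collision on 32-byte inputs. Applying A1 once more to \texttt{profileHash} as a digest of the canonicalised profile, the two classes commit to the same canonical profile document. A5 then ensures this committed document is the profile the issuer actually published, so the commitment refers to real semantics.

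The main obstacle I expect is the first step's reliance on the destination check. The stored \texttt{indicator\_id} must be derived from, or at least consistent with, the stored \texttt{profileHash}; otherwise an owner could register a \texttt{TokenClass} whose two fields disagree. I would first argue that registration recomputes or verifies $\text{keccak256}(\texttt{profileHash})$ on-chain. Failing that, I would fold this consistency into A5's honest-registration assumption.
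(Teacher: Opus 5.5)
Your proposal is correct and follows the paper's proof sketch step for step. You go back from \texttt{claim\_mint} through the \textsc{Pending} guard of \texttt{prepare\_mint}, use A6 to land in \texttt{lockForBurn} (on-chain \texttt{indicatorId} read, exact escrow of $A$), apply A1 to $\text{keccak256}(\texttt{profileHash})$ to equate the profile hashes, and use A5 to tie the anchored hash to the published profile.

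Two small remarks. First, your second appeal to A1 on the profile digest is unnecessary and lies outside A1's stated scope of 32-byte keccak256 inputs; the paper reads ``commit to the same profile'' as equal \texttt{profileHash} plus A5. Second, the registration-consistency concern you flag is one the paper settles only by asserting that registration binds $\texttt{indicatorId}=\text{keccak256}(\texttt{profileHash})$ on both chains.
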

\begin{IEEEproof}[Proof sketch]
Mints occur only in \texttt{claim\_mint}, which requires $\textsc{state}[H]=\textsc{Pending}$; \textsc{Pending} is set only by \texttt{prepare\_mint}, whose guard requires a registered destination class matching the message's \texttt{indicator\_id} and \texttt{token\_id}. By A6, the message was emitted by \texttt{lockForBurn}, which reads \texttt{indicatorId} on-chain from the source token contract rather than accepting user input, and escrows exactly the amount embedded in the message. Registration binds $\texttt{indicatorId} = \text{keccak256}(\texttt{profileHash})$ identically on both chains, so by A1 (collision resistance) equal identifiers imply equal profile hashes, and by A5 the anchored profile is the published one. Hence the minted class commits to the same profile as the escrowed class, and amounts coincide.
\end{IEEEproof}

Without A6, the amount clause fails (a forged prepare message can mint unbacked tokens into a registered class), but the class-identity clause survives: no execution path mints into an unregistered or identity-mismatched class.

\section{Prototype Implementation}
\label{sec:implementation}

We built a prototype that instantiates the design on Polygon (EVM source) and Neutron (Cosmos destination), with Axelar GMP as the relay. The EVM--Cosmos pair is deliberate: its architectural heterogeneity (Section~\ref{sec:bg-taxonomy}) exercises every cross-chain assumption. The design itself is not CosmWasm-specific---the destination component requires only inbound message verification, hashlock-indexed mutable state, and outbound message emission---so a Cosmos~SDK chain with a native module or another smart-contract environment with these capabilities would change only the implementation.

\textit{Source-chain contracts.} \texttt{IndicatorToken1155} extends ERC-1155 with the class metadata of Section~\ref{sec:indicator-model}, owner-gated class creation, and bridge-only mint/burn. \texttt{BridgeHTLC} enforces the minimum timelock and minimum bounty, stores locks under $H$, exposes \texttt{claimBurn}/\texttt{refundBurn} plus a pull-payment \texttt{withdrawPending} for bounty payouts that fail to transfer, and implements the Axelar Executable interface so \texttt{\_execute} is the only inbound-callback path; a unit-test suite covers the bounty paths. Both compile with Solidity 0.8.19 (IR pipeline, optimiser at 1000 runs).

\textit{Destination-chain contract.} \texttt{token\_bridge\_receiver} (CosmWasm) holds the mirrored class registry, sender whitelist, and hashlock state. \texttt{prepare\_mint} enforces strict 32-byte hex formats and the semantic-identity \emph{equality} of Theorem~\ref{thm:semantic}'s guard---the message's \texttt{indicator\_id} must equal the registered class's identity, not merely reference an existing class. Inbound authorisation binds to the IBC-hook intermediary account deterministically derived from the Axelar GMP account and the canonical channel. Builds are reproducible via \texttt{cosmwasm/optimizer}.

\textit{Outbound messaging from the Cosmos side.} Prepare messages travel as JSON, callbacks as ABI bytes; the reverse direction inverts this boundary exactly. Making Cosmos-originated messages (the callback and the reverse prepare) \emph{on-chain}---rather than delegating to an off-chain relayer process---required three integration constraints that are undocumented in combination and silently stall delivery when violated (Section~\ref{sec:eval-failures}): (i)~Neutron's wrapped \texttt{MsgTransfer} with non-zero acknowledgement/timeout fees for its fee-refunder module; (ii)~an explicit relayer \texttt{fee} in the Axelar GMP memo naming the fee-collector account, without which the message is routed but never executed; and (iii)~fee denomination in an \emph{Axelar-registered} asset (we use AXL)---unregistered assets (\eg, NTRN) are error-acknowledged and refunded. Claims therefore attach two coins: native fee-refunder fees plus the AXL relayer fee.

\textit{Reverse direction.} Cosmos~$\rightarrow$~EVM is implemented symmetrically: \texttt{lock\_for\_burn} escrows tokens and bounty on the CosmWasm side, reads class identity from the registry, and emits an ABI-encoded prepare via the on-chain IBC transfer; the EVM bridge validates counterpart and class identity in \texttt{\_execute}, and \texttt{claimMint} mints and emits the JSON \texttt{claim\_burn} callback. Both directions are covered by the repository's unit-test suites.

 The full source code, deployment scripts (using \texttt{ethers.js} and \texttt{CosmJS}), and the verification artefacts cited in Section~\ref{sec:evaluation} are publicly available~\cite{github_repo}.

\section{Evaluation}
\label{sec:evaluation}

We evaluate the prototype on Polygon and Neutron mainnets, addressing four questions: (i)~atomicity and semantic consistency in practice; (ii)~end-to-end latency and its acceptability for environmental-asset workloads; (iii)~per-side on-chain cost and its decomposition into messaging vs.\ semantic-integrity layer; (iv)~safety and liveness under plausible failure modes. Transaction hashes for every run are published in the repository~\cite{github_repo} for independent verification on Polygonscan and Mintscan.

\textit{Experimental setup.} Mainnet Axelar GMP routes were used with default finality: $\sim$64~Polygon-block confirmations before relaying, instant BFT finality on Neutron. The source-paid relay fee is denominated in the source-chain native gas token (POL after the September~2024 MATIC/POL rebrand) and covers destination execution gas and validator attestation. Each transfer used $T_e = 2$~h, $\Delta T = 30$~min.

\subsection{Atomic Transfer Results}
\label{sec:eval-htlc}

\textit{Fully automatic bidirectional validation.} For each transfer we verified escrow-before-any-destination-operation, mint-only-after-reveal, burn-after-callback, and \texttt{indicatorId} consistency between the prepare message and the registered class. EVM$\rightarrow$Cosmos completed end-to-end in 108~s (34~s prepare relay, $\sim$1~s claim, 62~s \emph{on-chain automatic callback}---the destination's IBC message travelling Neutron$\rightarrow$Axelar$\rightarrow$Polygon and executing the source burn with no party's intervention); Cosmos$\rightarrow$EVM completed in 93~s (55~s prepare, $\sim$1~s claim, 31~s callback). The leg asymmetry is structural (Section~\ref{sec:bg-taxonomy}): Axelar-relayed legs are faster and less variable than legs riding public IBC relayers, and the EVM-side prepare additionally pays Axelar's stricter source-finality policy.

\textit{Latency characterization.} We ran a campaign of 50 Polygon$\rightarrow$Neutron transfers across 7 distinct days and 9 hours of the day. All 50 succeeded and \emph{all} were finalised by the on-chain automatic callback---no run required the manual fallback. End-to-end latency was 97~s median (p90 122~s, p95 141~s, max 164~s), split into a 33~s median forward relay and a 47~s median callback leg (Figure~\ref{fig:cdf}); the tail is modest (p95 is $1.45\times$ the median) and stays far below the 30-min $\Delta T$. Gas was essentially deterministic (\texttt{lockForBurn} 639{,}191~gas in 49 of 50 runs). These latencies are orders of magnitude faster than voluntary-carbon-market clearing today~\cite{frontiers2024klimadao}.

\begin{figure}[t]
\centering
\includegraphics[width=0.72\columnwidth]{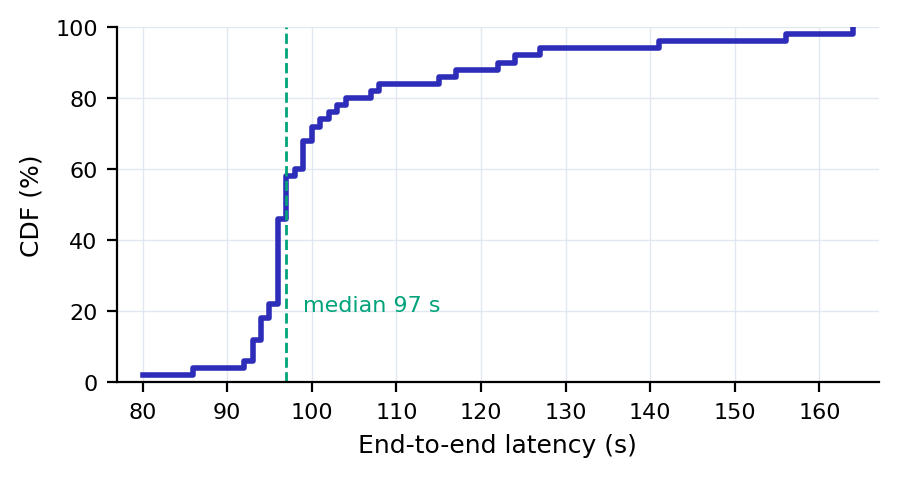}
\caption{Empirical CDF of end-to-end latency over the 50-transfer Polygon$\rightarrow$Neutron campaign (7 days). Median 97~s; p95 141~s; all runs finalised by the automatic callback.}
\label{fig:cdf}
\end{figure}

\subsection{Transaction Costs}
\label{sec:eval-costs}

The per-transfer costs in the EVM$\rightarrow$Cosmos direction are: on Polygon, \texttt{lockForBurn} (615{,}645~gas, including bounty escrow and GMP emission) and the callback-executed burn (executed by Axelar's relayer, paid from the relay fee; the manual \texttt{claimBurn} fallback costs 113{,}406~gas including the bounty payout); on Neutron, \texttt{claim\_mint} (452{,}000--467{,}000~gas, including the on-chain IBC callback emission). In the reverse direction, \texttt{lock\_for\_burn} costs 457{,}585~gas on Neutron and \texttt{claimMint} 225{,}752~gas on Polygon. Per transfer, the user pays the source-chain gas, the forward relay fee ($\sim$1~POL), the destination fee-refunder fees (0.4~NTRN, half refunded on success), the callback relay fee ($\sim$1~AXL, unused portion refundable by Axelar's gas service), and the optional bounty (returned unless the fallback path is exercised). The cost is dominated by the two relay fees rather than on-chain gas; at representative mainnet prices during the campaign the total user-paid cost is on the order of US\$1--2 per transfer. The total remains orders of magnitude below the value of typical indicator tokens (dollars per kgCO$_2$e or per kWh), supporting the viability claim in Section~\ref{sec:disc-implications}. The refund path was validated by executing deliberately unclaimed transfers; refund operations recover the escrowed tokens \emph{and} the bounty, so the timeout path imposes no asymmetric penalty on users.

\subsection{Cost Decomposition: Where Does the Overhead Go?}
\label{sec:eval-overhead}

Decomposing \texttt{lockForBurn} (639{,}191~gas), the Axelar GMP gas service dominates ($\sim$56\%, $\sim$345{,}000~gas), followed by HTLC state storage and JSON payload construction. The semantic-integrity layer---a single \texttt{SLOAD} of \texttt{indicatorId} plus 64~bytes of payload---costs $\sim$2{,}100~gas, under 0.4\% of the total: against an otherwise-identical HTLC bridge without semantic validation, preserving semantics adds essentially nothing. The cross-chain machinery, not the semantic layer, is what costs.

\subsection{Failure-Mode Analysis: Safety and Liveness}
\label{sec:eval-failures}

A bridge must satisfy two informal properties: \textit{safety}---bad outcomes (double-spend, semantic mixing, theft) do not occur---and \textit{liveness}---good outcomes (eventual claim or refund) eventually occur. The protocol preserves both unconditionally in four of five failure modes; the fifth weakens safety to a conditional property. (a)~\textit{Forward relay failure (GMP$_1$ lost):} no destination state is created; the user refunds via \texttt{refundBurn} after $T_e$. (b)~\textit{Callback relay failure with monitor present:} $S$ becomes publicly readable on the destination once \texttt{claim\_mint} succeeds, and any observer can submit \texttt{claimBurn}$(H, S)$ on the source before $T_e$; the corrective window of $\Delta T = 30$~min is large relative to typical Axelar latencies ($\leq$1~min). (c)~\textit{Callback relay failure with no monitor before $T_e$:} the only mode in which safety is not unconditional. The user can call \texttt{refundBurn} after $T_e$, leaving destination tokens in existence---supply inflation. The capable set is unbounded (any destination observer), and the escrowed bounty makes the motivated set nonempty: any rational observer profits from submitting \texttt{claimBurn}. The residual assumption is one rational, live observer acting before $T_e$ (A4); we name the regime in which no such observer exists \textit{weak supply conservation} (Sections~\ref{sec:disc-callback-gap},~\ref{sec:limitations}). (d)~\textit{Invalid secret:} \texttt{claim\_mint} and \texttt{claimBurn} require exactly 32 bytes, \texttt{0x}-prefixed hex, hashing to $H$; the fixed-length check rules out secret-encoding malleability, a known source of real HTLC bugs. Invalid secrets are rejected without state change. (e)~\textit{Unregistered class on destination:} \texttt{prepare\_mint} rejects messages whose \texttt{token\_id} has no registered class with matching identity, no destination state is created, and the user refunds after $T_e$; semantic mixing is prevented by construction.

\textit{Deployment fault observations.} The deployment exercised several failure paths, with verifiable transaction evidence~\cite{github_repo}: an \emph{undelivered prepare} left no destination state and was recovered via \texttt{refundBurn}; a \texttt{claim\_mint} violating the destination chain's IBC fee rules \emph{aborted atomically} (no mint, pending state still claimable); and callbacks emitted without a valid relayer fee were \emph{routed but never executed} (modes~b/c)---in one such episode the escrow stayed locked until a monitor collected the bounty via \texttt{claimBurn} 1{,}258~s after lock, a live demonstration of the incentive-compatible fallback under a genuinely failed callback. Safety held in every episode and every locked asset was recovered through the designed paths.

\subsection{Security Argument}
\label{sec:eval-security}

We argue that the protocol satisfies three properties: \textbf{P1 (atomicity)} a transfer either completes on both chains or leaves no net token movement; \textbf{P2 (supply conservation)} the global token supply across both chains never increases; \textbf{P3 (semantic preservation)} no token of class $X$ on the source ever produces a token of class $Y\neq X$ on the destination. P2 and P3 correspond to invariants I2 and I3 of Section~\ref{sec:state-machine}; P3 is established by Theorem~\ref{thm:semantic}. The arguments below sketch how each property is established and which assumptions (Section~\ref{sec:threat-model}) it depends on.

Dependencies follow Table~\ref{tab:assumptions}. P1 (atomicity) rests on A1+A2---A1 makes the hashlock binding (no $S'\neq S$ with $\text{keccak256}(S')=H$), A2 prevents either chain from rewinding a confirmed state transition---and, like P2, is \emph{conditional on A4 in the callback-failure regime}: in mode~(c) a destination mint coexists with a source refund, violating P1 and P2 simultaneously unless a monitor acts. P2 additionally rests on A6 (a relay that can forge prepare messages can mint unbacked tokens into registered classes) and on A3 (the destination eventually learns of the lock so a claim is possible). P3's class-identity guarantee rests on A1+A5: A1 makes \texttt{indicatorId} a binding commitment to the profile, and A5 ensures the profile published by the issuer at registration is the one the on-chain hash actually commits to; the on-chain identity check in \texttt{prepare\_mint} then forecloses any mint into an unregistered or mismatched class. The \emph{amount-backing} clause of Theorem~\ref{thm:semantic} additionally requires A6.

Modes~(a), (b), (d), and~(e) of Section~\ref{sec:eval-failures} preserve all three properties under these assumptions: in~(a) and~(e) no destination state is created and the escrow is refundable, so the properties hold vacuously; in~(b) the public secret lets any observer restore P2 via \texttt{claimBurn} before $T_e$; in~(d) A1 plus the strict 32-byte preimage check rules out any accepted $S' \neq S$, and rejected calls leave state unchanged.

 Mode~(c) is the only regime in which P1 and P2 are conditional. Under A1+A2, $S$ is still public on the destination after \texttt{claim\_mint}, so the capable set for invoking \texttt{claimBurn} is unbounded; both properties hold iff at least one party in that set acts before $T_e$ (assumption A4). Concretely, the dependency reduces to: ``with probability~1, some party that observes the destination's \texttt{Claimed} event submits the source-side \texttt{claimBurn} within $T_e - t_{\text{claim}} \geq \Delta T = 30$~min''. We refer to this conditional regime as weak supply conservation (Section~\ref{sec:disc-callback-gap}) and discuss the three mitigations that close or reduce the conditional in deployment.

\textit{Model checking.} We formalised the state machine of Section~\ref{sec:state-machine} in TLA\textsuperscript{+}~\cite{lamport2002} and exhaustively model-checked it with TLC over a discretised clock ($T_c{=}4$, $T_e{=}6$; a second configuration uses $T_c{=}3$, $T_e{=}5$) and a relay that may delay, drop, or duplicate either message (duplicate deliveries are verified to be state no-ops). TLC mechanically confirms the analysis above: without A4, I2 is violated---the checker returns an 11-state counterexample in which the callback is merely \emph{delayed} past $T_e$ and \texttt{refundBurn} then coexists with a destination mint---while I1 and I3 hold over the exhaustive state space (157 distinct states). With A4, modelled as an oracle guard (``if the secret is public, a monitor claims before the refund executes''), all invariants hold exhaustively (145 distinct states). The mechanised claim is therefore \emph{I2 holds under A4-as-assumption}, matching Section~\ref{sec:disc-callback-gap}; the monitor's liveness itself is not verified. Two scope notes bound what the model establishes: I2 is the \emph{per-transfer} projection of P2---violations of P2 by forged mints (A6) lie outside the model, whose relay is faulty but not Byzantine---and the model covers a single transfer instance. The specification, configurations, and reproducible logs are published in the repository~\cite{github_repo}.

\section{Discussion}
\label{sec:discussion}

\subsection{Critical Analysis of the Results}
\label{sec:disc-critical}

The principal positive finding is that domain-specific integrity is essentially free once messaging is paid for; the principal unanticipated finding is that the dominant operational risk on the callback path is not relay \emph{latency} but relay \emph{eligibility}---a message whose fee is missing or wrongly denominated is routed and then silently never executed, a failure only the monitor path (or the timeout) bounds---though across the 50-transfer campaign every run completed automatically, indicating that under normal relayer conditions the eligibility risk is a deployment-configuration concern rather than a recurring runtime failure. Two scope caveats remain: \textit{atomicity is not unconditional} (the callback-failure regime of Section~\ref{sec:disc-callback-gap}, which we resist rounding to ``atomic in practice''), and \textit{semantic mixing is prevented at the bridge level only}---downstream pooling after minting reproduces the BCT failure mode, out of scope here.

\subsection{Weak Supply Conservation: A Conditional-Safety Property}
\label{sec:disc-callback-gap}

We name the safety property under callback failure \textit{weak supply conservation}: strictly weaker than Herlihy's unconditional supply conservation~\cite{herlihy2018} in that it additionally requires A4. The weakening is structural: in a single-owner \textit{transfer} (unlike a two-party \textit{exchange}) the user already holds the destination tokens when the source burn is needed, so without an incentive \texttt{claimBurn} would be rewardless.

The implemented mitigation is the \textit{bountied callback}: the escrowed bounty is paid to whoever submits a valid \texttt{claimBurn} (a pull-payment fallback ensures payout failure cannot block settlement), turning the empty motivated set into ``any rational observer''---the assumption class watchtower systems (\eg, Lightning) deploy in production, and the one TLC verifies as sufficient (Section~\ref{sec:eval-security}).

\textit{Adversarial-sender analysis.} The party that profits in mode~(c) is the \emph{sender}, who also funds the callback fee as claimer and can thus deliberately induce callback failure (mis-denominating that fee, a class we observed operationally). With an \emph{optional} bounty this defeats incentive compatibility for free (lock with $b=0$, claim, sabotage the callback, refund). The protocol therefore enforces $b \geq b_{\min}$ on both chains, $b_{\min}$ owner-set above the monitor's execution cost (default 0.05~POL). Since the secret is public after the claim and \texttt{claimBurn} is permissionless, a sabotaged callback funds its own defeat: the attacker's best case is paying $b_{\min}$ to a monitor for no gain. Two stronger mitigations remain future work: accountable \textit{watchtowers}~\cite{herlihy2018} and \textit{native IBC callbacks}, which bypass the relay and collapse A4 into A3.

\subsection{Comparison with Existing Approaches}
\label{sec:disc-comparison}

 We adapt HTLCs from atomic exchange to atomic transfer and replace the timeout-ordering safety argument with a relay-driven callback. When correctly funded, the on-chain callback leg has a 47~s median and a 79~s p95 over the 50-transfer campaign (31~s in the reverse direction)---more than an order of magnitude below the 30-min $\Delta T$, mitigating the timeout race~\cite{thyagarajan2022} without the cryptographic apparatus of swap-focused work. The price is the relay dependency on the callback hop---and, as the deployment observations show, that dependency includes fee-eligibility conditions under which a callback stalls \emph{indefinitely} rather than slowly, making the bountied monitor path load-bearing rather than auxiliary (Section~\ref{sec:disc-callback-gap}).

Against alternatives: SATP~\cite{satp2023} reaches the same end state via trusted 2PC gateways---opposite ends of the trust/coordination trade-off. Validator-quorum bridges (Wormhole, LayerZero, CCIP, ITS) offer far broader coverage; we share their honest-majority assumption for message \emph{authenticity} (A6), but a Byzantine quorum here is confined to inflation within registered classes---it cannot cross class boundaries (Theorem~\ref{thm:semantic}) nor release escrow without the published secret. None preserves domain-specific semantics, so the BCT failure mode could re-emerge if any were used for indicator tokens unmodified.

\subsection{Implications}
\label{sec:disc-implications}

 A 97~s median latency, a per-transfer cost on the order of US\$1--2, and $<$0.4\% semantic overhead make the approach viable for environmental-asset workloads (retirement, compliance reporting, periodic transfers between venues), though not for high-frequency movement. The indicator-identity construction is domain-agnostic: any multi-class asset (fractionalised real-world assets, credentials) faces the same mixing risk and admits the same solution at essentially zero added gas.

\subsection{Limitations}
\label{sec:limitations}

The user must reveal $S$ before $T_c$ (inherent to HTLC atomicity), and the $\geq$30-min monitoring window is a real dependency---the bounty makes acting within it rational but still requires a live observer. Only hashes are anchored, so a malicious distributor can equivocate on the off-chain profile; a content-addressed registry (IPFS CID anchoring) would close this vector. Class creation is owner-gated. The latency campaign covers EVM$\rightarrow$Cosmos ($n=50$); the reverse direction is validated end-to-end but characterised by a smaller sample ($n=2$). The prototype is Polygon--Neutron over Axelar only---portability to other pairs is argued as a capability checklist, not demonstrated---and the state machine is machine-checked at the design level only; implementation-level verification~\cite{maric2025} remains open.

\section{Conclusions and Future Work}
\label{sec:conclusion}

We addressed atomicity across heterogeneous chains and semantic mixing during transfer, with two contributions: the \textit{automatic callback with an escrowed minimum bounty} (HTLCs adapted from exchange to transfer, the failure regime covered by an incentive-compatible monitor path machine-checked as sufficient for supply conservation), and the content-addressed \textit{deterministic indicator identity} that prevents mixing by construction (Theorem~\ref{thm:semantic}), addressed by no existing approach. Mainnet evaluation showed fully automatic bidirectional transfers over 50 runs across 7 days (97~s median) and documented the integration failure modes of Cosmos-originated messaging. \textit{Future work}: native IBC callbacks (eliminating the monitor assumption), DAO-governed class registration, and implementation-level formal verification~\cite{maric2025}.

\section*{Acknowledgment}

This work was supported by ANII (Uruguay) under grant POS\_NAC\_2023\_4\_178540 and by Pyxis. The authors used Claude (Anthropic) in agent mode as a coding assistant for implementation; all AI-assisted code was reviewed, tested, and validated by the authors.

\balance
\bibliographystyle{IEEEtran}

\begin{thebibliography}{32}

\bibitem{belchior2021}
R.~Belchior, A.~Vasconcelos, S.~Guerreiro, and M.~Correia, ``A Survey on Blockchain Interoperability: Past, Present, and Future Trends,'' \emph{ACM Computing Surveys}, vol.~54, no.~8, pp.~1--41, 2021.

\bibitem{wang2021}
G.~Wang, Q.~Wang, and S.~Chen, ``Exploring Blockchains Interoperability: A Systematic Survey,'' \emph{ACM Computing Surveys}, vol.~55, no.~13s, Art. no. 290, Jul. 2023.

\bibitem{zhang2024bridges}
L.~Zhang \emph{et al.}, ``Security of Cross-Chain Bridges: Attack Surfaces, Defenses, and Open Problems,'' in \emph{Proc. 27th Int. Symp. Research in Attacks, Intrusions and Defenses (RAID)}, 2024.

\bibitem{chainalysisBridges2022}
Chainalysis Team, ``Cross-Chain Bridge Hacks Emerge as Top Security Risk,'' Aug. 2, 2022. \url{https://www.chainalysis.com/blog/cross-chain-bridge-hacks-2022/}

\bibitem{pandolfi2025rapid}
S.~Pandolfi, G.~Llamb\'{i}as, and L.~Gonz\'{a}lez, ``Blockchain Interoperability for Environmental Indicators: a Rapid Review,'' in \emph{Proc. 7th Int. Conf. Blockchain Computing and Applications (BCCA)}, Dubrovnik, Croatia, 2025, pp.~857--866, doi: \href{https://doi.org/10.1109/BCCA66705.2025.11229811}{10.1109/BCCA66705.2025.11229811}.

\bibitem{frontiers2024klimadao}
A.~Ballesteros-Rodr\'{i}guez, J.~De-Lucio, and M.-\'{A}.~Sicilia, ``Tokenized carbon credits in voluntary carbon markets: the case of KlimaDAO,'' \emph{Frontiers in Blockchain}, vol.~7, Art.\ no.\ 1474540, Oct. 2024, doi: \href{https://doi.org/10.3389/fbloc.2024.1474540}{10.3389/fbloc.2024.1474540}.


\bibitem{gray1978}
J.~Gray, ``Notes on Data Base Operating Systems,'' in \emph{Operating Systems: An Advanced Course}, R.~Bayer, R.~M.~Graham, and G.~Seegm\"{u}ller, Eds. Berlin, Heidelberg: Springer, 1978, pp.~393--481.

\bibitem{herlihy2018}
M.~Herlihy, ``Atomic Cross-Chain Swaps,'' in \emph{Proc. 2018 ACM PODC}, 2018, pp.~245--254.

\bibitem{thyagarajan2022}
S.~A.~K.~Thyagarajan, G.~Malavolta, and P.~Moreno-Sanchez, ``Universal Atomic Swaps: Secure Exchange of Coins Across All Blockchains,'' in \emph{Proc. 2022 IEEE Symp. Security and Privacy (S\&P)}, 2022, pp.~1299--1316.

\bibitem{carbonplan2022}
G.~Badgley and D.~Cullenward, ``Zombies on the Blockchain,'' CarbonPlan, Apr. 2022. \url{https://carbonplan.org/research/toucan-crypto-offsets}

\bibitem{zamyatin2021}
A.~Zamyatin \emph{et al.}, ``SoK: Communication Across Distributed Ledgers,'' in \emph{Proc. Financial Cryptography and Data Security (FC)}, 2021, pp.~3--36.

\bibitem{satp2023}
M.~Hargreaves, T.~Hardjono, and R.~Belchior, ``Secure Asset Transfer Protocol (SATP) Core,'' Internet Engineering Task Force, Internet-Draft, draft-ietf-satp-core, 2024. \url{https://datatracker.ietf.org/doc/draft-ietf-satp-core/}

\bibitem{wormhole2024}
Wormhole Foundation, ``Wormhole: Generic Message Passing Protocol---Architecture and Security Overview,'' \url{https://docs.wormhole.com/wormhole/}

\bibitem{layerzero2021}
R.~Zarick, B.~Pellegrino, and C.~Bhageria, ``LayerZero: Trustless Omnichain Interoperability Protocol,'' LayerZero Labs, Whitepaper V1.1.0, 2021. \url{https://layerzero.network/publications/LayerZero_Whitepaper_V1.1.0.pdf}

\bibitem{ccip2023}
Chainlink Labs, ``Chainlink 2.0: Next Steps in the Evolution of Decentralized Oracle Networks,'' Whitepaper, Apr. 2021. \url{https://research.chain.link/whitepaper-v2.pdf}

\bibitem{axelar2024its}
Axelar Network, ``Interchain Token Service Documentation,'' \url{https://docs.axelar.dev/dev/send-tokens/interchain-tokens/intro/}

\bibitem{zkbridge2022}
T.~Xie \emph{et al.}, ``zkBridge: Trustless Cross-chain Bridges Made Practical,'' in \emph{Proc. 2022 ACM SIGSAC Conf. Computer and Communications Security (CCS)}, 2022, pp.~3003--3017.

\bibitem{robinson2022}
P.~Robinson, ``Survey of Crosschain Communications Protocols,'' \emph{Computer Networks}, vol.~200, Art.\ no.\ 108488, Dec. 2021.

\bibitem{axelar}
S.~Gorbunov and G.~Vlachos, ``Axelar Network: Connecting Applications with Blockchain Ecosystems,'' Axelar, Whitepaper, Jan. 2021. \url{https://docs.axelar.dev/axelar_whitepaper.pdf}

\bibitem{ibc}
C.~Goes, ``The Interblockchain Communication Protocol: An Overview,'' \emph{arXiv preprint arXiv:2006.15918}, Jun. 2020.

\bibitem{li2025bridges}
N.~Li \emph{et al.}, ``Blockchain Cross-Chain Bridge Security: Challenges, Solutions, and Future Outlook,'' \emph{Distrib. Ledger Technol.: Res. Pract.}, vol.~4, no.~1, Art.\ no.\ 8, Feb. 2025.

\bibitem{erc1155}
W.~Radomski, A.~Cooke, P.~Castonguay, J.~Therien, E.~Binet, and R.~Sandford, ``EIP-1155: Multi Token Standard,'' Ethereum Improvement Proposals, 2018. \url{https://eips.ethereum.org/EIPS/eip-1155}

\bibitem{guo2024}
H.~Guo \emph{et al.}, ``A Framework for Efficient Cross-Chain Token Transfers in Blockchain Networks,'' \emph{J. King Saud Univ.---Comput. Inf. Sci.}, vol.~36, no.~2, Art.\ no.\ 101968, Feb. 2024.

\bibitem{maric2025}
O.~Mari\'{c}, B.~Scholz, and P.~Suboti\'{c}, ``Formal Verification of a Fail-Safe Cross-Chain Bridge,'' in \emph{Proc. 6th Int. Workshop on Formal Methods for Blockchains (FMBC)}, 2025.

\bibitem{lamport2002}
L.~Lamport, \emph{Specifying Systems: The TLA\textsuperscript{+} Language and Tools for Hardware and Software Engineers}. Boston, MA: Addison-Wesley, 2002.

\bibitem{ics20}
Interchain Foundation, ``ICS 20: Fungible Token Transfer,'' Cosmos IBC Specification, 2019. \url{https://github.com/cosmos/ibc/blob/main/spec/app/ics-020-fungible-token-transfer/README.md}

\bibitem{ics721}
Interchain Foundation, ``ICS 721: Non-Fungible Token Transfer,'' Cosmos IBC Specification, 2022. \url{https://github.com/cosmos/ibc/blob/main/spec/app/ics-721-nft-transfer/README.md}

\bibitem{gravity2022}
Gravity Bridge, ``Gravity Bridge Design Overview,'' GitHub, 2022. \url{https://github.com/cosmos/gravity-bridge/blob/main/docs/design/overview.md}

\bibitem{hop2021}
C.~Whinfrey, ``Hop: Send Tokens Across Rollups,'' Hop Protocol, Whitepaper, 2021. \url{https://hop.exchange/whitepaper.pdf}

\bibitem{nomad2022}
Immunefi, ``Hack Analysis: Nomad Bridge, August 2022,'' Aug. 2022. \url{https://medium.com/immunefi/hack-analysis-nomad-bridge-august-2022-5aa63d53814a}

\bibitem{toucan2022}
Toucan Protocol, ``Tokenization of Carbon Credits,'' 2022. \url{https://toucan.earth/}

\bibitem{klimadao2024}
KlimaDAO, ``Digital Carbon Ecosystem,'' \url{https://docs.klimadao.finance/ecosystem/digital-carbon}

\bibitem{github_repo}
S.~Pandolfi, ``evm-cosmos-indicator-htlc: Atomic Cross-Chain Bridge for Environmental Indicator Tokens,'' GitHub Repository, 2026. \url{https://github.com/sebapandolfi/evm-cosmos-indicator-htlc}





\end{thebibliography}

\end{document}